\providecommand{\algorithmname}{Algorithm}
\let\oldforeign@language\foreign@language
\DeclareRobustCommand{\foreign@language}[1]{%
	\lowercase{\oldforeign@language{#1}}}
\let\oldforeign@language\foreign@language
\DeclareRobustCommand{\foreign@language}[1]{%
	\lowercase{\oldforeign@language{#1}}}
\newtheorem{lem}{Lemma}
\newtheorem{thm}{Theorem}
\newtheorem{rem}{Remark}
\newtheorem{assum}{Assumption}
\def\ps@IEEEtitlepagestyle{%
	\def\@oddhead{\parbox[t][\height][t]{\textwidth}{\centering \scriptsize
			Personal use of this material is permitted. Permission from the author(s) and/or copyright holder(s), must be obtained for all other uses. Please contact us and provide details if you believe this document breaches copyrights.\\
			\noindent\makebox[\linewidth]{}
		}\hfil\hbox{}}%
	\def\@evenhead{\scriptsize\thepage \hfil \leftmark\mbox{}}%
	\def\@oddfoot{\parbox[t][\height][l]{\textwidth}{
			\vspace{-20pt}{\rule{\textwidth}{0.4pt}}\\ \footnotesize			{\bf{\footnotesize\textcolor{red}{A. Shevidi and H. A. Hashim, "Adaptive Backstepping and Non-singular Sliding Mode Control for Quadrotor UAVs with Unknown Time-varying Uncertainties," in Proceedings of the 2024 American Control Conference (ACC), Ottawa, Canada, 2024.}}}\\
			\noindent\makebox[\linewidth]
		}\hfil\hbox{}}%
	\def\@evenfoot{\MYfooter}}
\begin{document}
	\bstctlcite{IEEEexample:BSTcontrol}

\title{Adaptive Backstepping and Non-singular Sliding Mode Control for Quadrotor UAVs with Unknown Time-varying Uncertainties}

\author{Arezo Shevidi and Hashim A. Hashim% <-this % stops a space
	\thanks{This work was supported in part by National Sciences and Engineering
		Research Council of Canada (NSERC), under the grants RGPIN-2022-04937
		and DGECR-2022-00103.}
	\thanks{A. Shevidi and H. A. Hashim are with the Department of Mechanical
		and Aerospace Engineering, Carleton University, Ottawa, Ontario, K1S-5B6,
		Canada, (e-mail: hhashim@carleton.ca).}
}

% \markboth{IEEE TRANSACTIONS ON INTELLIGENT TRANSPORTATION SYSTEMS, \today}{Hashim \MakeLowercase{\textit{et al.}}: Landmark and IMU Data Fusion: Systematic Convergence Geometric Nonlinear Observer for SLAM and Velocity Bias}

% \markboth{}{Hashim \MakeLowercase{\textit{et al.}}: Nonlinear Filter for Simultaneous Localization and Mapping on a Matrix Lie Group using IMU and Feature Measurements}

\maketitle
\begin{abstract}
	This paper presents a novel quaternion-based nonsingular control system for underactuated vertical-take-off and landing (VTOL) Unmanned Aerial Vehicles (UAVs). Position and attitude tracking is challenging regarding singularity and accuracy. Quaternion-based Adaptive
Backstepping Control (QABC) is developed to tackle the underactuated issues of UAV control systems in a cascaded way.
Leveraging the virtual control (auxiliary control) developed in the QABC,
desired attitude components and required thrust are produced. Afterwards, we
propose Quaternion-based Sliding Mode Control (QASMC) to
enhance the stability and mitigate chattering issues. The sliding
surface is modified to avoid singularity compared to conventional
SMC. To improve the robustness of controllers, the
control parameters are updated using adaptation laws. Furthermore, the asymptotic stability of translational and rotational
dynamics is guaranteed by utilizing Lyapunov stability and
Barbalet Lemma. Finally, the comprehensive comparison results
are provided to verify the effectiveness of the proposed controllers
in the presence of unknown time-varying parameter uncertainties
and significant initial errors. 
\end{abstract}

\begin{IEEEkeywords}
Non-singular Sliding Mode Control, Adaptive Backstepping Control, Unit-quaternion, Drones, Unmanned Aerial Vehicles, Asymptotic Stability, Position and Orientation Control
\end{IEEEkeywords}

\section{Introduction}\label{sec1}

\paragraph*{Motivation} In the last few decades, there has been a noticeable rise in developing drones for various applications intended for both civilian and research domains. Improvements in teleportation technology, energy store capacity, electronics, and computational theory achieve such an advancement \cite{hashim2023exponentially,alzahrani2020uav,ghamari2022unmanned,ali2024deep,hashim2024uwb}. Among various kinds of Unmanned Aerial vehicles (UAVs), quadrotor has garnered considerable attention and has been used in numerous projects. The streamlined and straightforward mechanical configuration of quadrotors renders it a suitable candidate to operate in both indoor and outdoor environments \cite{hashim2023exponentially,hashim2024uwb,hashim2021geometric}.  

\paragraph*{Related Work} Designing controllers for UAVs, including quadrotors, is challenging regarding control design, model representation, and stability \cite{hashim2023observer,jonnalagadda2024segnet}. Various control approaches have been employed to address the control system limitations of UAVs. In the literature, several control strategies have been suggested to enhance the stability, guidance, and navigation of UAVs. Feed-forward and feedback control methods have been suggested to manage the linearized quadrotor model \cite{formentin2011flatness}.
The work in \cite{bouabdallah2004pid} and \cite{okyere2019lqr} used conventional PID and LQR-based controllers to enhance stability. To improve robustness, optimal control methods, including $H_{\infty}$ approaches, were utilized in \cite{kim2019comprehensive}. However, it is noteworthy that such controllers were designed for the linearized model of UAVs, which might not accurately represent real-world scenarios. In this context, the work in \cite{mathisen2021precision} addressed the nonlinear model of UAVs and designed Nonlinear Model Predictive Control (NMPC) in the presence of model mismatches. In the same spirit, Lyapunov-based nonlinear controllers using low-cost measurement units have been developed \cite{hashim2023exponentially,hashim2023observer}. The approaches in \cite{huo2022collision,xue2022compound} focused on the robustness aspect of controllers, utilizing advanced MPC techniques with optimal constraints. Despite MPC being an advanced method based on optimization and prediction, its main drawbacks in real-world applications are the computational burden and complexity. Among several suggested approaches, Sliding Mode Control (SMC) provides numerous advantages due to its inherent characteristics in handling model uncertainties, external disturbances, and nonlinearities \cite{shevidi2024quaternion}. SMC implementation not only guarantees stability but also reduces computational load. Designing a proper sliding surface with its control switch law is crucial in mitigating chattering, which is a significant reason for control failure.

The first and most challenging is the limitation of underactuated systems \cite{hashim2023observer,hashim2023exponentially}. Most UAVs are considered underactuated systems, where having control over all states is often impossible. While \cite{danesh2023auto} suggested super-twisting sliding mode control to enhance robustness and reduce chattering, they did not account for the practical and realistic limitations of underactuated UAVs. Backstepping control addresses the complexity of underactuated systems, leveraging its inherent cascaded strategy to generate auxiliary control\cite{kwan2000robust,shevidi2024quaternion}. Another challenge is the singularity issue, causing catastrophic instability in UAVs' flight missions. Preventing singularity in control design and model representation is fundamental for ensuring stability and effective path planning. While \cite{incremona2023guidance,bist2023state,zhao2023sliding,belmouhoub2023fast} introduced Euler angle-based sliding mode controllers to enhance control performance, they failed to develop singularity-free control designs and model representations. Most commonly established methods, such as Euler angle-based techniques, are susceptible to kinematic and model representation singularities. Such concerns hold profound significance when it comes to guaranteeing the stability and efficacy of path planning for UAVs.
\paragraph*{Contributions} Motivated by the above mentioned gaps, we propose Quaternion-based Adaptive Backstepping Control (QABC) with its cascaded strategy, as an outer loop, to control the UAV translation components. Subsequently, we develop Quaternion-based Adaptive Sliding Mode Control (QASMC) for the attitude control system (inner loop). The QASMC is non-singular based on the proposed sliding surface. The control parameters are adaptively tuned to enhance the robustness and reduce chattering in control signals. Lyapunov theorem and Barbalet Lemma have been employed to guarantee asymptotic stability of the proposed approach. Finally, a comprehensive comparison has been simulated to validate the singularity-free and superior performance of the proposed control system.
\paragraph*{Structure}The remaining of the paper is organized as follows:
The preliminaries and notation are provided in Section \ref{sec:pre}. Section \ref{sec:dyna} formulates the UAV model in quaternion representation. Section \ref{sec:Control} proposes the novel quaternion-based controllers. Section \ref{sec:Result} presents the simulation results validating the performance of the proposed controller. Finally, Section \ref{sec:co} summarizes the paper.
\section{Preliminaries\label{sec:pre}}
The set of real numbers and the set of real numbers with dimensional space $\alpha$-by-$\beta$ are described by $\mathbb{R}$ and $\mathbb{R}^{\alpha\times \beta}$, respectively. $\boldsymbol{I}_{\alpha}\in\mathbb{R}^{\alpha\times \alpha}$ denotes identity matrix and $\boldsymbol{0}_{\alpha}\in\mathbb{R}^{\alpha\times \alpha}$ represents zero matrix.  $\|z\|=\sqrt{z^{\top} z}$ defines an Euclidean norm of the column vector $z\in\mathbb{R}^{n}$. Let UAV be traveling in 3D space where the moving-frame of the vehicle (body-frame) is represented by $\{\mathcal{B}\}$, meanwhile $\{\mathcal{I}\}$ denotes the inertial frame. A skew-symmetric matrix is given by $[\Omega]_\times$ such that:
\begin{align*}
	\left[\Omega\right]_{\times} & =\left[\begin{array}{ccc}
		0	 & -\Omega_{3} &\Omega_{2}\\
		\Omega_{3} & 0 & -\Omega_{1}\\
		-\Omega_{2} & \Omega_{1} & 0
	\end{array}\right],\hspace{1em}\Omega=\left[\begin{array}{c}
		\Omega_{1}\\
		\Omega_{2}\\
		\Omega_{3}
	\end{array}\right]
\end{align*}
\begin{comment}
	For $v\in\mathbb{R}^{3}$, the following characteristics hold true: %\cite{hashim2019special}:
	\begin{equation*}
		\label{eq:eqOne}
		\begin{aligned}
			&  -[v]_\times v = 0_{3\times 1}\\
			& {[v]_{\times}}^{\top}=-{[v]_{\times}}
		\end{aligned}
	\end{equation*}
\end{comment}
Unit-quaternion offers a singularity-free orientation representation in 3D space. $Q$ denotes a unit-quaternion vector where $Q=[q_{0}, q_{1}, q_{2}, q_{3}]^{\top}=[q_{0},\boldsymbol{q}^{\top}]^{\top}\in\mathbb{R}^{4}$ and $||Q||=1$ with %\cite{hashim2019special,hashim2023observer}.
$q_{0}\in\mathbb{R}$ and $\boldsymbol{q}=[q_{1},q_{2},q_{3}]^{\top}\in\mathbb{R}^{3}$. The unit-quaternion elements are defined as \cite{hashim2019special}:
\[
Q=q_{0} + iq_{1}+ jq_{2} + kq_{3}
\]
where standard basis-vectors are represented by $i$, $j$, and $k$. The inversion of $Q$  is denoted as \cite{hashim2019special}:
\[
Q^{-1}=Q^{*}=[q_{0},-q_{1},-q_{2},-q_{3}]^{\top}=[q_{0},-\boldsymbol{q}^{\top}]^{\top}\in\mathbb{R}^{4}
\]
Consider the quaternion multiplication to be described by $\otimes$ and the two quaternion vectors describing true and desired UAV attitude are $Q=[q_{0},\boldsymbol{q}^{\top}]^{\top}$ and $Q_{d}=[q_{0d}, \boldsymbol{q}_{d}^{\top}]^{\top}$, respectively, such that $q_{0}, q_{0d}\in\mathbb{R}$ and $\boldsymbol{q}, \boldsymbol{q}_{d}\in\mathbb{R}^{3}$. The quaternion multiplication of true and desired vectors are given by \cite{hashim2019special}:
\[
Q\otimes Q_{d}=\left[\begin{array}{c}q_{0} q_{0d} + \boldsymbol{q}^{\top}\boldsymbol{q}_{d} \\ q_{0d}\boldsymbol{q} - q_{0} \boldsymbol{q}_{d} + [\boldsymbol{q}]_{\times}\boldsymbol{q}_{d}
\end{array}\right] = \left[\begin{array}{c} \tilde{q}_{0} \\ \tilde{\boldsymbol{q}}\end{array}\right]
\]
Quaternion multiplication is associative while it is not commutative. The UAV attitude representation in form of a rotational matrix described with respect to the Lie Group of the Special Orthogonal Group $SO(3)$ is described by \cite{hashim2019special}:
\[
R_{Q}=(q_{0}^{2} - \boldsymbol{q}^{\top}\boldsymbol{q})\mathbf{I}_{3} + 2\boldsymbol{q} \boldsymbol{q}^{\top} + 2 q_{0} [\boldsymbol{q}]_\times\in SO(3)\subset \mathbb{R}^{3\times 3}
\]
where $det(R_{Q})=+1$ and $R_{Q}R_{Q}^{\top}=\mathbf{I}_{3}$ with $det(\cdot)$ being determinant of a matrix and $R_{Q}\in\{\mathcal{B}\}$.
%The trace of matrix is used to find the Euclidean norm of $R$. ${\rm Tr}\{R\}$ is represented the trace of $R$. %The Euclidean norm of Rotation matrix $R$ can be found as\cite{hashim2019special,hashim2023observer}:
%\[
%||R||= \frac{1}{4} {\rm Tr}\{\mathbf{I}_{3\times3} -R\}
%\]
\begin{comment}
	\begin{figure*}
		\centering{}\includegraphics[scale=0.5]{}\caption{Block diagram of the proposed quaternion-based control system  (Inner loop consists of rotational control while the outer loop includes translational control)}
		\label{fig:fig5}
	\end{figure*}
\end{comment}
\section{Quaternion-based model representation of VTOL-UAV\label{sec:dyna}}
The quadrotors' architecture comprises four rotors that generate thrust by rotating blades in a particular direction. The flight motion of UAVs consists of two distinct parts, namely translation and rotation \cite{labbadi2019robust,hashim2023observer}. Most quadrotors are classified as under-actuated systems, requiring a cascaded control strategy to regulate the translational (external loop) and rotational (internal loop) motion. Translation motion generates thrust and the desired attitude for the rotation part. In this regard, finding a singularity-free model representation is crucial to design controllers. In this section, compared to other methods (e.g. Euler angles), the quaternion-based model representation of translation and rotation is addressed.
\subsection{UAV Quaternion-based Translational Dynamic}
The objectives of translation motion are to track desired position $P_{d}=[p_{d1}, p_{d2}, p_{d3}]\in\mathbb{R}^{3}$ and linear velocity $V_{d}=[v_{d1}, v_{d2}, v_{d3}]\in\mathbb{R}^{3}$. The translation dynamics of UAV is given by: \cite{hashim2023exponentially,hashim2023observer}:
\begin{equation}
	\label{eq:eq1}
	\text { Translation : }
	\begin{array}{ll}
		\left[\begin{array}{l}
			\dot{P}\\
			\dot{V}
		\end{array}\right]=\left[\begin{array}{c}
			V \\
			g  e_z + m^{-1} R^{\top}_{Q} T
		\end{array}\right].
	\end{array}
\end{equation}
where $P=[p_x, p_y, p_z]\in\mathbb{R}^{3}$ and $V=[v_{1}, v_{2}, v_{3}]\in\mathbb{R}^{3}$ with $P,V\in\{\mathcal{I}\}$ are position and linear velocity of UAV, respectively. 
%Where $R_{Q(:,3)}$ is achieved by visiting the last column of rotation matrix $R$ mapping body-frame to inertial-frame :
%\begin{equation}
%	\label{eq:eq2}
%	R_{Q(:,3)} = \left[\begin{array}{l}  2\left(q_{1} q_{3}+q_{0} q_{2}\right) \\  2\left(q_{2} q_{3}-q_{0} q_{1}\right) \\1-2\left(q_{1}^2+q_{2}^2\right)\end{array}\right] 
%\end{equation}
$m$ and $g$ denotes mass and gravity acceleration, respectively. $T=[0, 0, -\Im]^{\top} \in \mathbb{R}^3$ is the total control input, $\Im$ denotes the total thrust, and $e_z=[0, 0, 1]^{\top}$ is a standard basis vector of inertial frame. $R_{Q}$ is the quaternion rotation matrix as follows\cite{hashim2023exponentially,hashim2023observer}:
\begin{equation}
	\label{eq:eqR}
	R_{Q}=\left(q_{0}^2-\boldsymbol{q}^{\top} \boldsymbol{q}\right) I_{3}+2 \boldsymbol{q} \boldsymbol{q}^{\top}+2 q_{0} \mathbf{[q]}_{\times}
\end{equation}
where $q_{0}$ and $\boldsymbol{q}=[q_{1}, q_{2}, q_{3}]^{\top}$ are unit-quaternions and $[\boldsymbol{q}]_{\times}$ is Skew-symmetric matrix. By substituting \eqref{eq:eqR} in \eqref{eq:eq1}, the vehicle translation expanded state space model is obtained as\cite{hashim2023exponentially,shevidi2024quaternion}:
\begin{align}
	\label{eq:eqTranslation}
	\text {Translation:}  & \begin{cases}
		\dot{p}_x & = v_{1}\\
		\dot{v}_{1} & =-2 m^{-1 }\Im \left(q_{0} q_{2}+q_{1} q_{3}\right) \\ 
		\dot{p}_y & =v_{2}\\
		\dot{v}_{2} & =-2 m^{-1} \Im\left(q_{2} q_{3}-q_{0} q_{1}\right) \\
		\dot{p}_z & =v_{3}\\
		\dot{v}_{3} & =m^{-1} \Im\left(q_{1}^2+q_{2}^2-q_{0}^2-q_{3}^2\right)+ g
	\end{cases}
\end{align}

\begin{assum}
	\label{rem:rem1}
	The quadrotor is a rigid symmetric body, and the desired position trajectory $P_{d}$ is bounded and twice differentiable.
\end{assum}
To this point, the translation state space is defined by \eqref{eq:eqTranslation} and $T=[0, 0, -\Im]^{\top} \in \mathbb{R}^3$ as a total control input should be designed to track desired position and guarantee the stability of the translation dynamics by generating total thrust $\Im$ and desired values for rotational dynamics.
\subsection{Rotational Dynamic}
The rotational motion of quadrotors is defined by a highly nonlinear set of equations which are challenging in terms of control. $\mathcal{T}=[\tau_{1}, \tau_{2}, \tau_{3}]^{\top}\in\mathbb{R}^3$ denotes torque as an attitude control input. Desired trajectories are produced by a total thrust $\Im$. The rotational dynamics are as follows  \cite{hashim2023exponentially}:
\begin{equation}
	\label{eq:eqRotate}
	\text { Rotation}
	\left\{\begin{array}{ll}
		\dot{R}_{Q} & =-[\omega]_{\times} R_{Q} \\
		J_{m} \dot{\omega} & =[J_{m} \omega]_{\times} \omega+\mathcal{T},
	\end{array}\right.
\end{equation}
where $\omega\in\mathbb{R}^3$ is angular velocity and $R_{Q}\in SO(3)$. Let $R_{Q}, \omega, \mathcal{T} \in\{\mathcal{B}\}$, and $J_{m}=[J_{ij}]_{3\times3}\in\mathbb{R}^{3\times 3}$ is the inertia matrix which is diagonal.
%\begin{equation}
%	\label{eq:eq5}
%	J= \left[\begin{array}{ccc}
	%		I_{xx} & 0 & 0 \\
	%		0 & I_{yy} & 0 \\
	%		0 & 0 & I_{zz}
	%	\end{array}\right]
%\end{equation}
%Where $I_{xx}$, $I_{yy}$, $I_{zz}$ are inertia values.
By revisiting \eqref{eq:eqRotate} and substituting \eqref{eq:eqR}, the UAV rotational motion is rewritten as \cite{wu2022modeling,hashim2023observer}:
\begin{align}
	\label{eq:eqExRotate}
	\text{Rotation} & \begin{cases}
		\dot{q}_{0} & =-\frac{1}{2} \boldsymbol{q}^{\top} \omega\\
		\dot{\boldsymbol{q}} & =\frac{1}{2}\left(q_{0} \omega+[\boldsymbol{q}]_{\times} \omega\right)
		\\
		J_{m}\dot{\omega} & =[J_{m} \omega]_{\times} \omega+\mathcal{T}
	\end{cases}
\end{align}
where $Q=[q_{0}, q_{1}, q_{2}, q_{3}]^{\top}=[q_{0},\boldsymbol{q}^{\top}]^{\top}\in\mathbb{R}^{4}$. 
%\begin{assum}\label{Assum:Assume1}
%	It is assumed that the structure of UAV is symmetric and rigid.
%\end{assum}
%\begin{assum}\label{Assum:Assume2}
%	The desired position $X_{d} \in\mathbb{R}^3$ is assumed to be bounded and twice differentiable.
%\end{assum}
%By substituting \eqref{eq:eq4} in \eqref{eq:eq1} and \eqref{eq:eq6}, the expanded mathematical state space equations of the UAV rotation and translation dynamics can be described by: %add ref%
In the light of \eqref{eq:eqRotate}, and subsitituting $\boldsymbol{q}$, $J_{m}=diag(J_{11}, J_{22}, J_{33})\in\mathbb{R}^{3\times3}$, the state space expanded attitude model is expressed as:
\begin{align}
	\label{eq:eq8}
	\text{Rotation}&
	\begin{cases}
		\dot{\omega}_{1} & =J_{11}^{-1}(\left(J_{22}-J_{33}\right) \omega_{2} \omega_{3}+\tau_{1}) \\
		\dot{\omega}_{2} & =J_{22}^{-1}(\left(J_{33}-J_{11}\right) \omega_{3} \omega_{1}+\tau_{2} )\\
		\dot{\omega}_{3} & =J_{33}^{-1}(\left(J_{11}-J_{22}\right) \omega_{1} \omega_{2}+\tau_{3} )\\
		\dot{q}_{0} & =-\frac{1}{2}\left(q_{1} \omega_{1}+q_{2} \omega_{2}+q_{3} \omega_{3}\right) \\
		\dot{q}_{1} & =\frac{1}{2}\left(q_{0} \omega_{1}-q_{3} \omega_{2}+q_{2} \omega_{3}\right)\\
		\dot{q}_{2} & =\frac{1}{2}\left(q_{3} \omega_{1}+q_{0} \omega_{2}-q_{1} \omega_{3}\right) \\
		\dot{q}_{3} & =\frac{1}{2}\left(-q_{2} \omega_{1}+q_{1} \omega_{2}+q_{0} \omega_{3}\right)
	\end{cases}
\end{align}
where $\omega= [\omega_{1}, \omega_{2}, \omega_{3}]^{\top}$ and $\boldsymbol{q}=[q_{1},q_{2},q_{3}]^{\top}$ refer to angular velocity and quaternion elements, respectively. Also, $\tau_{1}$, $\tau_{2}$, and $\tau_{3}$ are rotational torque elements of UAV. 
To this end, the quaternion-based expanded nonlinear translational and rotational model dynamics are obtained in \eqref{eq:eqRotate} and \eqref{eq:eqTranslation}. Hence, in the next section, the input controllers are designed to guarantee stability and convergence of position, linear velocity, orientations, and angular velocity.      
% To find the state space representation, the states are considered as $M=[x_p,v_{1}, y_p, v_{2}, z_p, v_{3}, \omega_{1}, \omega_{2}, \omega_{3}, q_{0}, q_{1}, q_{2}, q_{3}]$  $\in\mathbb{R}^{13}$ and the state space form is $\dot M=\mathcal{\gamma}(M) + G(M) N$ where $N$ is the input control.\\
%bIn the following sections, controllers will be designed to track desired position and attitude as input control for \eqref{eq:eq7}, \eqref{eq:eq8}, and \eqref{eq:eqQ}.

\section{Quaternion-based control design\label{sec:Control}}
This section is dedicated to proposing a quaternion-based control system for underactuated UAVs. One way to tackle underactuated issues is to use a backstepping approach by generating virtual cascaded control. We will develop a Quaternion-based ABC for the UAV translational part (position and linear velocity) to handle uncertainties with its adaptive feature. Next, Quaternion-based Adaptive Sliding Mode Control (QASMC), as a straightforward approach with its streamlined implementation, will be developed for controlling the UAV rotational part (orientation and angular velocity). QASMC has low sensitivity to plant parameter variations and disturbances, eliminating the necessity of exact modelling. The adaptive feature of QASMC reduces chattering issues, and the sliding surface modification avoids singularity when compared to conventional SMC. The control is developed in a cascaded manner to address the underactuation challenge.
% Tightly, as shown in Fig. \ref{fig:fig5}
The outer loop of the proposed control system is QABC, which controls the UAV translational motion, while the inner loop is QASMC controlling the UAV rotational motion. The proposed schemes guarantee asymptotic stability and accurate trajectory tracking of both translational and rotational motion trajectories of the UAV.

\subsection{Adaptive Back Stepping control for translational motion\label{subsec:pose}}
In order to address the singularity shortcoming of Euler-based controllers (e.g. \cite{labbadi2019robust}), we design Quaternion-based Adaptive Backstepping Controller (QABC) as an effective alternative approach. Its recursive and cascaded implementation of QABC is a potential way to resolve underactuated system complexity. Let $P_{d}= [p_{d1}, p_{d2}, p_{d3}]$ be the desired position trajectories and let Assumption \ref{rem:rem1} be met. The first step to designing QABC is to compute the error between the true position trajectory and desired one as follows:
\paragraph*{Step 1} (Compute position and linear velocity errors) Based on \eqref{eq:eqTranslation}, position error of quadrotor $\tilde{P}=[\tilde{p}_{x},\tilde{p}_{y},\tilde{p}_{z}]^{\top}$ is obtained by:
\begin{equation}
	\label{eq:eq9}
	\begin{aligned}
		& \tilde{p}_{x}=p_{x}-p_{d1}\\
		& \tilde{p}_{y}=p_{y}-p_{d2}\\
		& \tilde{p}_{z}=p_{z}-p_{d3}\\
	\end{aligned}
\end{equation}
Then the linear velocity error is calculated by the derivative of \eqref{eq:eq9}:
\begin{equation}
	\label{eq:eq10}
	\begin{aligned}
		& \dot{\tilde{p}}_{x}=\dot{p}_{x}-\dot{p}_{d1}=v_{1}-\dot{p}_{d1} \\
		& \dot{\tilde{p}}_{y}=\dot{p}_{y}-\dot{p}_{2d}=v_{2}-\dot{p}_{d3} \\
		& \dot{\tilde{p}}_{z}=\dot{p}_{z}-\dot{p}_{3d}=v_{3}-\dot{p}_{d3}
	\end{aligned}
\end{equation}
The next step is to define the Lyapunov function candidate to guarantee stability and design virtual controllers necessary to generating the required thrust.
\paragraph*{Step 2}
Let us formulate controllers to ensure stability and accurate tracking based on the backstepping approach and real value functions as follows:
\begin{equation}
	\label{eq:eq11}
	\begin{aligned}
		& \mathcal{L}_{1}=\frac{1}{2} \tilde{p}_{x}^2, \hspace{20 pt} \mathcal{L}_{3}=\frac{1}{2} \tilde{p}_{y}^2, \hspace{20 pt} \mathcal{L}_{5}=\frac{1}{2} \tilde{p}_{z}^2 
	\end{aligned}
\end{equation}
Obtaining the derivative of \eqref{eq:eq11} and subsequently substituting \eqref{eq:eq10} in \eqref{eq:eq11}, one obtains:
\begin{equation}
	\label{eq:eq12}
	\begin{aligned}
		& \dot{\mathcal{L}}_{1}=\tilde{p}_x \left(\dot{\tilde p}_x\right)=\tilde{p}_x\left(v_{1}-\dot{p}_{d1}\right) \\
		&\dot{\mathcal{L}}_{3}=\tilde{p}_y \left(\dot{\tilde p}_y\right)=\tilde{p}_y\left(v_{2}-\dot{p}_{d2}\right) \\
		& \dot{\mathcal{L}}_5=\tilde{p}_z \left(\dot{\tilde p}_z\right)=\tilde{p}_z\left(v_{3}-\dot{p}_{d3}\right) \\
	\end{aligned}
\end{equation}
Hence, $v_{d1}$, $v_{d2}$, and $v_{d3}$ is defined to ensure the stability of \eqref{eq:eq12} as as follows:
\begin{equation}
	\label{eq:eq13}
	\begin{aligned}
		& v_{d1}=-\theta_{x} \tilde{p}_x+\dot{p}_{d1} \\
		& v_{d2}=-\theta_{y} \tilde{p}_y+\dot{p}_{d2} \\
		&v_{d3}=-\theta_{z} \tilde{p}_z+\dot{p}_{d3}
	\end{aligned}
\end{equation}
where $\theta_{x}$, $\theta_{y}$, and $\theta_{z}$ are positive constants. Based on the translation dynamics \eqref{eq:eqTranslation} and Assumption \ref{rem:rem1}, the position errors in \eqref{eq:eq9} reaches to zero by designing $v_{d1}$, $v_{d2}$, and $v_{d3}$ as in \eqref{eq:eq13}. One can conclude the stability and convergence of position error by computing the derivative of Lyupanov in \eqref{eq:eq11} and substituting \eqref{eq:eq13} in \eqref{eq:eq12}:
%converges to origin if the desired $v_{d1}$, $v_{d2}$, and $v_{d3}$ are selected as \eqref{eq:eq13}. It is obvious that $\tilde{p}$, $e_{yp}$, and $e_{zp}$ are converging to zero, if \eqref{eq:eq12} becomes negative definite. One finds
\begin{equation}
	\label{eq:eq14}
	\begin{aligned}
		\dot{\mathcal{L}}_{1} & =\tilde{p}_{x} \dot{\tilde p}_x=\tilde{p}_x\left(-\theta_{x} \tilde{p}_x+\dot{p}_{d1}-\dot{p}_{d1}\right) = -\theta_{x} \tilde{p}_{x}^2< 0 \\
		\dot{\mathcal{L}}_{3} & =\tilde{p}_{y} \dot{\tilde{p}}_{y}=\tilde{p}_{y}\left(-\theta_{y} \tilde{p}_{y}+\dot{p}_{d2} -\dot{p}_{d2}\right) = -\theta_{y} \tilde{p}_{y}^2< 0 \\
		\dot{\mathcal{L}}_5 & =\tilde{p}_{z}\dot{\tilde{p}}_{z}=\tilde{p}_{z}\left(-\theta_{z} \tilde{p}_{z}+\dot{p}_{d3}-\dot{p}_{d3}\right) = -\theta_{z} \tilde{p}_{z}^2< 0 
	\end{aligned}
\end{equation}
Since $\mathcal{L}_{1}>0$, $\mathcal{L}_{3}>0$, and $\mathcal{L}_{5}>0$ are positive and $\dot{\mathcal{L}}_{1}< 0$, $\dot{\mathcal{L}}_{3}< 0$, and $\dot{\mathcal{L}}_{5}< 0$ are negative definite, one can ensure that position errors converge to zero. To this end, the accurate tracking and stability of position is guaranteed. As a next step, we should show the precise tracking of linear velocity.

\paragraph*{Step 3} To guarantee the convergence of  $v_{1}$, $v_{2}$, and $v_{3}$ to $v_{d1}$, $v_{d2}$, and $v_{d3}$ defined in \eqref{eq:eq13}, we compute the error between true and actual values as follows:
\begin{equation}
	\label{eq:eq15}
	\begin{aligned}
		& \tilde{v}_{1}=v_{1}-v_{d1} \\
		& \tilde{v}_{2}=v_{2}-v_{d2} \\
		& \tilde{v}_{3}=v_{3}-v_{d3} 
	\end{aligned}
\end{equation}
% As a next step, we need to show that $v_{1}$, $v_{2}$, and $v_{3}$ are converging to $v_{d1}$, $v_{d2}$, and $v_{d3}$, respectively. Let us define the following set of errors:
Similar to \textit{Step 1} and \textit{Step 2}, let us define the following Lyapunov function candidate:
\begin{equation}
	\label{eq:eq16}
	\begin{aligned}
		&\mathcal{L}_{2}=\mathcal{L}_{1}+\frac{1}{2} \tilde{v}_{1}^2\\
		& \mathcal{L}_4=\mathcal{L}_{3}+\frac{1}{2} \tilde{v}_{2}^2\\
		& \mathcal{L}_6=\mathcal{L}_5+\frac{1}{2} \tilde{v}_{3}^2
	\end{aligned}
\end{equation}
%In view of \eqref{eq:eq14} and \eqref{eq:eq16}, one finds:
%\begin{equation}
%	\label{eq:eq17}
%	\begin{aligned}
	%		& \dot{\mathcal{L}}_{2}=\dot{\mathcal{L}}_{1}+\tilde{v}_{1} \dot{\tilde{\xi}}_{1}=-m_{v1}\tilde{p}^2+e_{v1}\left(\dot{v}_{1}-\dot{v}_{1 d}\right) \\
	%		& \dot{\mathcal{L}}_4=\dot{\mathcal{L}}_{3}+e_{v2} \dot{e}_{v2}=-m_{v2}e_{yp}^2+e_{v2}\left(\dot{v}_{2}-\dot{v}_{2 d}\right) \\
	%		& \dot{\mathcal{L}}_6=\dot{\mathcal{L}}_5+e_{v3} \dot{e}_{v3}=-m_{v3}e_{zp}^2+e_{v3}\left(\dot{v}_{3}-\dot{v}_{3 d}\right) \\
	%	\end{aligned}
%\end{equation}
%where $m_{v1}$, $m_{v2}$, and $m_{v3}$ are positive constants to be selected subsequently. 
One method to tackle the problem of underactuation involves utilizing the backstepping approach by generating auxiliary control (virtual control). Now, $F_x$, $F_y$, and $F_z$ are introduced as auxiliary commands (inner controls) aimed at producing the thrust. Plugging \eqref{eq:eq10}, 
\eqref{eq:eq9}, and \eqref{eq:eq13} in \eqref{eq:eq16} and by $F_{x}$, $F_{y}$, and $F_{z}$ as virtual controls, the following equations are derived as:
\begin{equation}
	\label{eq:eq18}
	\begin{aligned}
		\dot{\mathcal{L}_{2}} & =-\theta_{x} \tilde{p}_x^2+\tilde{v}_{1}\left(F_{x}+\left(\theta_{x} \dot{\tilde{p}}_x-\ddot{p}_{d1}\right)\right) \\
		\dot{\mathcal{L}_{4}} & =-\theta_{y} \tilde{p}_{y}^2+\tilde{v}_{2}\left(F_{y}+\left(\theta_{y} \dot{\tilde{p}}_{y}-\ddot{p}_{d2}\right)\right) \\
		\dot{\mathcal{L}_{6}} & =-\theta_z \tilde{p}_z^2+ \tilde{v}_{3}\left(F_{z}+\left(\theta_z \dot{\tilde{p}}_z-\ddot{p}_{d3}\right)\right) 
	\end{aligned}
\end{equation}
By plugging $\dot{\tilde{p}}_x$, $\dot{\tilde{p}}_y$, and $\dot{\tilde{p}}_z$, one can design the virtual control input $F_{x}$, $F_{y}$, $F_{z}$ as follows:
\begin{equation}
	\label{eq:eq19}
	\begin{aligned}
		& F_{x}=\theta_{x}^2 \tilde{p}_x-\hat{\psi}_{x} \tilde{v}_{1}+\ddot{p}_{d1}\\
		& F_{y}=\theta_{y}^2 \tilde{p}_y-\hat{\psi}_{y}\tilde{v}_{2}+\ddot{p}_{d2}\\
		& F_{z}=\theta_{z}^2 \tilde{p}_z-\hat{\psi}_{z} \tilde{v}_{3}+\ddot{p}_{d3}\\
	\end{aligned}
\end{equation}
Where $\hat{\psi}_{x}$, $\hat{\psi}_{y}$, and $\hat{\psi}_{z}$ are the estimation of adaptive control parameters for adjusting position control parameters enhancing stability and the speed convergence by online adaptation laws as:
\begin{equation}
	\label{eq:ad1p}
	\text{Update Law}  \begin{cases}
		\dot{\hat{\psi}}_{x}& = \eta_{1} \tilde{v}_{1}^2\\
		\dot{\hat{\psi}}_{y}& = \eta_{2} \tilde{v}_{2}^2\\
		\dot{\hat{\psi}}_{z}& = \eta_{3} \tilde{v}_{3}^2
	\end{cases}
\end{equation}
with $\eta_{1}$, $\eta_{2}$, and $\eta_{3}$ are positive user-defined constants.

\begin{thm}
	\label{thm:Theorem1}
	Consider the translation dynamics described by \eqref{eq:eqTranslation}. If the pose control is implemented as \eqref{eq:eq19} with adaptation law in \eqref{eq:ad1p}, then the closed-loop translational system is asymptotically stable, meanwhile position and velocity errors are $\tilde{P}\rightarrow 0_{3\times 1}$ and $\tilde{V}\rightarrow 0_{3\times 1}$, respectively.
	%the position of UAV ($P$) converges to $P_{d}$, while maintaining bounded linear velocity $V$. \textcolor{red}{I understand that position and velocity errors are asymptotically stable $\tilde{P}\rightarrow 0_{3\times 1}$ and $\tilde{V}\rightarrow 0_{3\times 1}$. I think derivative of the linear velocity is bounded not the linear velocity itself}
\end{thm}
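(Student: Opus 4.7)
My plan is to augment the three axis-wise Lyapunov functions $\mathcal{L}_2,\mathcal{L}_4,\mathcal{L}_6$ from \eqref{eq:eq16} with quadratic penalties on the adaptive parameter estimation errors into a single composite Lyapunov candidate, show that its derivative along the closed-loop trajectories is negative semi-definite in the error coordinates after substituting the virtual control \eqref{eq:eq19} and the adaptation law \eqref{eq:ad1p}, and then invoke Barbalat's lemma to conclude asymptotic convergence of $\tilde{P}$ and $\tilde{V}$.

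First I would differentiate the velocity errors in \eqref{eq:eq15} along the translational dynamics \eqref{eq:eqTranslation} and use \eqref{eq:eq13} to obtain $\dot{\tilde{v}}_i=\dot{v}_i+\theta_i\dot{\tilde{p}}_i-\ddot{p}_{di}$ on each axis. Substituting the virtual control \eqref{eq:eq19} should leave a closed-loop residual whose dependence on $\hat{\psi}_i$ can be written as $(\hat{\psi}_i-\psi_i^{*})\tilde{v}_i$, where $\psi_i^{*}$ denotes a ``matching'' value of the adaptive gain (large enough to dominate the unknown time-varying uncertainty on the $i$th axis). With $\tilde{\psi}_i=\hat{\psi}_i-\psi_i^{*}$ for $i\in\{x,y,z\}$ I would then propose
\[
\mathcal{V}= \mathcal{L}_2+\mathcal{L}_4+\mathcal{L}_6 + \sum_{i\in\{x,y,z\}}\frac{1}{2\eta_i}\tilde{\psi}_i^{2},
\]
which is positive definite in $(\tilde{P},\tilde{V},\tilde{\psi})$. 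Differentiating along the closed loop and using \eqref{eq:ad1p} (via $\tfrac{1}{\eta_i}\tilde{\psi}_i\dot{\hat{\psi}}_i=\tilde{\psi}_i\tilde{v}_i^{2}$), the indefinite cross terms left by backstepping cancel and I expect a bound of the form
\[
\dot{\mathcal{V}} \le -\sum_{i}\theta_i\,\tilde{p}_i^{2}-\sum_{i}\kappa_i\,\tilde{v}_i^{2}\le 0,
\]
for appropriate constants $\kappa_i>0$, which gives uniform boundedness of $\tilde{P},\tilde{V}$ and of the estimates $\hat{\psi}_x,\hat{\psi}_y,\hat{\psi}_z$.

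Finally I would close with Barbalat's lemma. Assumption~\ref{rem:rem1} provides boundedness of $P_d$ and its first two derivatives; combined with the boundedness of the errors just obtained, every right-hand side in \eqref{eq:eqTranslation}, \eqref{eq:eq19} and \eqref{eq:ad1p} is bounded, so $\ddot{\mathcal{V}}$ is bounded and $\dot{\mathcal{V}}$ is uniformly continuous. Barbalat's lemma then forces $\dot{\mathcal{V}}\to 0$, and consequently $\tilde{P}\to 0_{3\times 1}$ and $\tilde{V}\to 0_{3\times 1}$, which is the claim. The main obstacle I foresee is the algebraic bookkeeping in the middle step: the cross-coupling $\tilde{p}_i\tilde{v}_i$ that naturally arises in $\dot{\mathcal{L}}_1,\dot{\mathcal{L}}_3,\dot{\mathcal{L}}_5$ whenever $v_i\neq v_{di}$ must be shown to be absorbed by the $\theta_i^{2}\tilde{p}_i$ contribution inside $F_i$, and because \eqref{eq:ad1p} is a one-sided law (with $\dot{\hat{\psi}}_i\ge 0$) the ``ideal'' value $\psi_i^{*}$ must be calibrated so that the surviving coefficient of $\tilde{v}_i^{2}$ is indeed negative. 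Once those cancellations are verified the rest is standard Lyapunov/Barbalat reasoning.
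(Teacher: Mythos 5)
Your proposal follows essentially the same route as the paper: the same composite Lyapunov candidate combining $\mathcal{L}_2,\mathcal{L}_4,\mathcal{L}_6$ with quadratic penalties on $\tilde{\psi}_x,\tilde{\psi}_y,\tilde{\psi}_z$, the same substitution of \eqref{eq:eq19} and \eqref{eq:ad1p} to cancel the $\tilde{\psi}_i$ cross terms and arrive at $\dot{\mathcal{V}}\le -\sum_i\theta_i\tilde{p}_i^2-\sum_i\psi_i\tilde{v}_i^2\le 0$, followed by Barbalat's lemma. Your version is in fact slightly more careful than the paper's on two points --- you weight the adaptation-error terms by $\tfrac{1}{2\eta_i}$ so the $\eta_i$ factors cancel exactly (the paper omits this weight), and you spell out the uniform-continuity argument needed for Barbalat rather than asserting negative definiteness --- but these are refinements of the same proof, not a different one.
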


\begin{lem}
	\label{Lemm:lem1} \cite{labbadi2019robust}
	According to Barbalet Lemma,
	let $g(s)$ be a uniformly bounded continuous function, and suppose that
	$\lim_{s\to+\infty}{\int_{0}^{\top}g(s)\,ds }$ exists, then $g(s)$ converges to the origin asymptotically.
\end{lem}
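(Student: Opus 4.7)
The plan is to argue by contradiction, exploiting uniform continuity of $g$ to convert a pointwise failure of convergence into a failure of the Cauchy criterion for the (convergent) integral $G(t)=\int_0^t g(s)\,ds$. Under the assumption that $\lim_{t\to+\infty} G(t)$ exists and is finite, $G$ satisfies the Cauchy property: for every $\varepsilon>0$ there is $T_\varepsilon$ such that $|G(b)-G(a)|<\varepsilon$ whenever $a,b\ge T_\varepsilon$. The target contradiction will be to exhibit arbitrarily late intervals $[a,b]$ on which $\left|\int_a^b g(s)\,ds\right|$ is bounded below by a fixed positive constant.

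First, I would suppose for contradiction that $g(s)\not\to 0$ as $s\to\infty$. Then there exist $\varepsilon_0>0$ and an increasing sequence $s_n\to\infty$ with $|g(s_n)|\ge \varepsilon_0$ for all $n$. By uniform continuity of $g$ on $[0,\infty)$, pick $\delta>0$ so that $|s-s'|<\delta$ implies $|g(s)-g(s')|<\varepsilon_0/2$. Consequently, on each interval $I_n=[s_n,s_n+\delta]$ we have
\begin{equation*}
|g(s)|\ge |g(s_n)|-|g(s)-g(s_n)|\ge \varepsilon_0-\tfrac{\varepsilon_0}{2}=\tfrac{\varepsilon_0}{2},
\end{equation*}
and, crucially, $g$ does not change sign on $I_n$ (otherwise there would be an $s'\in I_n$ with $g(s')=0$, contradicting $|g(s')|\ge \varepsilon_0/2$). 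Hence
\begin{equation*}
\left|\int_{s_n}^{s_n+\delta} g(s)\,ds\right|=\int_{s_n}^{s_n+\delta}|g(s)|\,ds \ge \tfrac{\varepsilon_0\,\delta}{2}.
\end{equation*}

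Now choose $\varepsilon=\varepsilon_0\delta/4$ in the Cauchy property of $G$, producing $T_\varepsilon$ such that $|G(b)-G(a)|<\varepsilon_0\delta/4$ for all $b\ge a\ge T_\varepsilon$. By passing to a subsequence, assume $s_n\ge T_\varepsilon$ for all large $n$, so $a=s_n$ and $b=s_n+\delta$ both exceed $T_\varepsilon$. Then
\begin{equation*}
\tfrac{\varepsilon_0\delta}{2}\le \left|\int_{s_n}^{s_n+\delta} g(s)\,ds\right|=|G(s_n+\delta)-G(s_n)|<\tfrac{\varepsilon_0\delta}{4},
\end{equation*}
a contradiction. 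Therefore $g(s)\to 0$ asymptotically.

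The main technical hurdle in this plan is the step that converts a single large value $|g(s_n)|\ge\varepsilon_0$ into a uniform lower bound $\varepsilon_0/2$ on an entire neighbourhood of length $\delta$; this is exactly where uniform continuity (rather than mere continuity) is indispensable, since it supplies a $\delta$ independent of $n$. Without that uniformity the intervals $I_n$ could shrink with $n$, and the lower bound on $|\int_{I_n} g|$ would degrade, leaving no contradiction with the Cauchy property. I would also take care, in writing the argument, to handle the boundary case where $s_n+\delta$ might coincide with another sample point, which is harmless but should be noted, and to remark that the hypothesis as stated (which the excerpt phrases as ``uniformly bounded continuous'') is used in the standard Barbalat form as uniform continuity on $[0,\infty)$.
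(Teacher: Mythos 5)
Your proof is correct, but there is nothing in the paper to compare it against: the paper imports this lemma by citation from \cite{labbadi2019robust} without proof (the \texttt{proof} environment that follows it in the source belongs to Theorem \ref{thm:Theorem1}, not to the lemma). What you have written is the standard self-contained proof of Barbalat's lemma, and it is sound: negating the conclusion yields $\varepsilon_0>0$ and $s_n\to\infty$ with $|g(s_n)|\ge\varepsilon_0$; uniform continuity spreads this into the bound $|g|\ge\varepsilon_0/2$ on intervals $[s_n,s_n+\delta]$ of length $\delta$ \emph{independent of $n$}; the sign-constancy observation converts this into $\left|\int_{s_n}^{s_n+\delta}g\right|\ge\varepsilon_0\delta/2$; and this contradicts the Cauchy criterion for $G(t)=\int_0^t g(s)\,ds$ (the paper's upper limit $\top$ in the integral is evidently a typo for $t$). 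The only cosmetic wrinkle is that uniform continuity gives $|g(s)-g(s_n)|<\varepsilon_0/2$ only for $|s-s_n|<\delta$ strictly, so the closed endpoint $s_n+\delta$ needs a limiting argument or a retreat to $[s_n,s_n+\delta/2]$; either fix is trivial. More substantively, your closing remark about the hypothesis is not a side note but essential: as literally stated (``uniformly bounded continuous''), the lemma is false --- a sum of triangular bumps of height $1$ and width $2^{-n}$ centered at $s=n$ is bounded and continuous with a convergent integral, yet does not tend to $0$. Uniform continuity (or a bounded derivative, as in the usual control-theoretic formulation) is precisely what excludes such shrinking bumps and what supplies your $n$-independent $\delta$, so your proof correctly repairs the statement rather than merely proving it.
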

\begin{proof}
	To prove the stability of subsystem and to define $\hat{\psi}_{x}$, $\hat{\psi}_{y}$, and $\hat{\psi}_{z}$,
	the Lyapoanov approach is used. Let Lemma. \ref{Lemm:lem1} hold true then by inserting controller \eqref{eq:eq19}, the stability of the translational system (both position and linear velocity) is guaranteed, based on Lyapunov function candidate as:
	\begin{equation}
		\label{eq:eq20}
		\begin{aligned}
			\mathcal{L}_{tot}	& = \frac{1}{2} \tilde{p}_x^2 + \frac{1}{2} \tilde{p}_y^2 + \frac{1}{2} \tilde{p}_z^2 + \frac{1}{2} \tilde{v}_{1}^2 + \frac{1}{2} \tilde{v}_{2}^2 + \frac{1}{2} \tilde{v}_{3}^2\\
			& +\frac{1}{2} \tilde{{\psi}}_{x}^{2} + \frac{1}{2} \tilde{{\psi}}_{y}^{2} + \frac{1}{2} \tilde{{\psi}}_{z}^{2}
		\end{aligned}
	\end{equation}
	where $\tilde{{\psi}}_{x}=\hat{\psi}_{x} - \psi_{x}$, $\tilde{{\psi}}_{y}=\hat{\psi}_{y} - \psi_{y}$, and $\tilde{{\psi}}_{z}=\hat{\psi}_{z} - \psi_{z}$.\\
	In view of \eqref{eq:eq18}, the derivative of \eqref{eq:eq20} is derived as:
	\begin{equation}
		\label{eq:d20}
		\begin{aligned}
			& \dot{\mathcal{L}}_{tot}= \dot{\mathcal{L}}_{2} + \dot{\mathcal{L}}_4 + \dot{\mathcal{L}}_6 + \tilde{{\psi}}_{x} \dot{\tilde{\psi}}_{x} +  \tilde{\psi}_{y}  \dot{\tilde{\psi}}_{y} + \tilde{\psi}_{z}  \dot{\tilde{\psi}}_{z}
		\end{aligned}
	\end{equation}
	By plugging \eqref{eq:eq19} in \eqref{eq:eq18} and \eqref{eq:d20}, one can obtain the simplification as follows:
	\begin{equation}
		\label{eq:eq20d}
		\begin{aligned}
			\dot{\mathcal{L}}_{tot} 	& = -\theta_{x} \tilde{p}_x^2 -\theta_{y} \tilde{p}_{y}^2-\theta_{z} \tilde{p}_{z}^2 -\hat{\psi}_{x} \tilde{v}_{1}^2
			-\hat{\psi}_{y} \tilde{v}_{2}^2\\
			& -\hat{\psi}_{z} \tilde{v}_{3}^2
			+ \tilde{\psi}_{x} \dot{\hat{\psi}}_{x} + \tilde{\psi}_{y} \dot{\hat{\psi}}_{y} + \tilde{\psi}_{z} \dot{\hat{\psi}}_{z} 
		\end{aligned}
	\end{equation}
	Then, by substituting \eqref{eq:ad1p} in  \eqref{eq:eq20d}, the stability of the translational dynamics is guaranteed as:
	\begin{equation}
		\label{eq:eq21}
		\begin{aligned}
			\dot{\mathcal{L}}_{tot} &= -\theta_{x} \tilde{p}_x^2 -\theta_{y} \tilde{p}_{y}^2-\theta_{z} \tilde{p}_{z}^2  -(\tilde{\psi}_{x}+\psi_{x}) \tilde{v}_{1}^2\\ 
			& -(\tilde{\psi}_{y}+\psi_{y}) \tilde{v}_{2}^2 
			-(\tilde{\psi}_{z}+\psi_{z})\tilde{v}_{3}^2 + \tilde{\psi}_{x} \tilde{v}_{1}^{2}+ \tilde{\psi}_{y} \tilde{v}_{2}^{2}\\
			& + \tilde{\psi}_{z} \tilde{v}_{3}^{2} 
			= -\theta_{x} \tilde{p}_{x}^2 -\theta_{y} \tilde{p}_{y}^2-\theta_z \tilde{p}_{y}^2
			- \psi_{x} \tilde{v}_{1}^{2}\\
			& - \psi_{y} \tilde{v}_{2}^{2}- \psi_z \tilde{v}_{3}^{2} \leq 0
	\end{aligned}\end{equation}
	Therefore, based on \eqref{eq:eq21}, $\dot{\mathcal{L}}_{tot}$ is negative definite, and $\mathcal{L}_{tot}$ is positive definite with $\dot{\mathcal{L}}_{tot}=0$ only if $\mathcal{L}_{tot}=0$. By doing so, it guarantees that the errors (both position error and linear velocity error) converge to the origin and the translational system is asymptotical stable.
	As \eqref{eq:eq21} becomes negative, it is ensured that the errors converge to zero and the system is asymptotically stable.\end{proof}
Now, one can compute the total thrust ($\Im$) to generate the desired attitude. From \eqref{eq:eqTranslation} and \eqref{eq:eq19}, $F_{x}$, $F_{y}$, and $F_{z}$, (auxiliary controls) defined in \eqref{eq:eq19} are rewritten as:
\begin{equation}
	\label{eq:eq22}
	\begin{aligned}
		& F_{x}=-2 m^{-1 }\Im\left(q_{0} q_{2}+q_{1} q_{3}\right)\\
		& F_{y}=-2 m^{-1} \Im\left(q_{2} q_{3}-q_{0} q_{1}\right)\\
		& F_{z}=-m^{-1} \Im\left(q_{0}^2-q_{1}^2-q_{2}^2+q_{3}^2\right)+ g\\
	\end{aligned}
\end{equation}
\paragraph*{Step 4} In order to find the actual thrust $\Im$, let us utilize \eqref{eq:eq22} as follows:
\begin{equation}
	\label{eq:eq23}
	\Im=m \sqrt{F_{x}^2+F_{y}^2+\left(F_{z}+g\right)^2}
\end{equation}
Based on \eqref{eq:eq23}, it is clearly evident that the absolute total of thrust force ($\Im$) will never exhibit singularity. In the light of \eqref{eq:eq22} and \eqref{eq:eq23}, the desired quaternions $q_{0 d}$, $q_{1 d}$, and $q_{2 d}$ are obtained as follows: For more information visit \cite{hashim2023exponentially, hashim2023observer}) :
\begin{equation}
	\label{eq:eq24}
	\begin{aligned}
		& q_{0d}=\sqrt{\frac{m\,\left(g+{F_{z}}\right)}{2\,{\Im}}+\frac{1}{2}}\\
		& q_{1 d}=-\frac{m\,{F_{y}}}{2\,{\Im}\,{q_{0 d}}}\\
		& q_{2 d}=\frac{m\,{F_{x}}}{2\,{\Im}\,{q_{0d}}}
	\end{aligned}
\end{equation}
%For more informatinon, please visit \cite{hashim2023observer}.
It is assumed that $q_{3 d} = 0$. To this end, the design of QABC, tailored to the position system, is introduced. The controller serves a dual purpose: computing the total thrust and providing the required quaternion orientations for attitude control. Additionally, the Lyapunov stability criterion is employed to ensure the stability of the translational component.
\subsection{Robust quaternion-based sliding mode attitude control\label{subsec:atitude}}
In the previous section, the backstepping approach is used to tackle the underactuated problem of the UAV. The QABC controller in \eqref{eq:eq24} generates desired orientation $Q_{d}$ for attitude control. The proposed control system becomes complete once we design attitude control. The main objective of the following section is to design sliding mode control to ensure that the orientation and angular velocity are followed accurately. The attitude control steps are as follows:
\paragraph*{Step 5} Formulate quaternion-based attitude errors \cite{wu2022modeling,hashim2019special}:
\begin{equation}
	\label{eq:eq25}
	\begin{aligned}
		& \tilde{q_{0}} =q_{0}q_{0d}+\boldsymbol{q}_{d}\boldsymbol{q}^{\top}\\
		& \tilde{q} =q_{0d}\boldsymbol{q}-q_{0}\boldsymbol{q}_{d}+[\boldsymbol{q}]_\times \boldsymbol{q}_{d},\hspace{20pt} \tilde{q}= [\tilde{q}_1, \tilde{q}_2, \tilde{q}_3]^{\top}\\
		& \tilde{R}_{Q} =R_{Q} R_{d}^{\top}
	\end{aligned}
\end{equation}
$R_{Q}$ and $\tilde{\omega}$ are respectively quaternion rotation matrix and angular velocity error, defined by\cite{hashim2023exponentially}:
\begin{equation}
	\label{eq:eq26}
	R_{Q}=\left(q_{0}^2-\boldsymbol{q}^{\top} \boldsymbol{q}\right) I_{3}+2 \boldsymbol{q} \boldsymbol{q}^{\top}+2 q_{0} \mathbf{[q]}_{\times}
\end{equation}
\begin{equation}
	\label{eq:eq27}
	\tilde{\omega}=\omega -\tilde{R}_{Q} \omega_{d}
\end{equation}
$\omega_{d}$ is obtained through the following expression\cite{hashim2019special}:
\begin{equation}
	\label{eq:eq28}
	\boldsymbol{\dot{Q}_{d}}= (q_{0 d}\mathbf{I}_{3}+[\boldsymbol{q}_{d}]_{\times})\omega_{d}
\end{equation}
Using \eqref{eq:eq25}, 
\eqref{eq:eq26}, \eqref{eq:eq27}, one can find the attitude error dynamics  as\cite{wu2022modeling,hashim2023observer}:
\begin{equation}
	\label{eq:eq29}
	\begin{cases}
		\dot{\tilde{q}}_{0} & =\frac{1}{2} \tilde{\boldsymbol{q}}^{\top} \tilde{\omega}\\
		\dot{\tilde{\boldsymbol{q}}} & =\frac{1}{2}\left(\tilde{q}_{0} \tilde{\omega}+ [\tilde{\boldsymbol{q}}]_\times \tilde{\omega}\right)\\
		J_{m}\dot {\tilde{\omega} }& =
		[J_{m} \omega]_{\times} \omega+\mathcal{T}+\\
		&\hspace{0.3cm}J_{m}\left[\tilde{\omega}\right]_{\times} \tilde{R}_{Q} \omega_{d}-J_{m} \tilde{R}_{Q} \dot{\omega}_{d}
	\end{cases}
\end{equation}
\paragraph*{Step 6} Let us introduce the sliding surface for QASMC control design \eqref{eq:eq29}
\begin{equation}
	\label{eq:eq30}
	s=\gamma_{1}\tilde{\boldsymbol{q}}+ \tilde{\omega}
\end{equation}
where $\gamma_{1}$ is a positive constant. Then by computing the derivative of \eqref{eq:eq30}, the following equation is obtained as:
\begin{equation}
	\label{eq:eq31}
	\dot{s}=\gamma_{1}\dot{\tilde{\boldsymbol{q}}}+\dot{\tilde{\omega}}
\end{equation}
\begin{rem}
	Since the sliding surface is linear and it is susceptible to singularity, one can assume the singularity-free sliding surface  as follows:
	\begin{equation}
		\label{eq:eq32}
		\Bar{s}=\gamma_{1} \varrho(\tilde{\boldsymbol{q}})+\tilde{\omega}
	\end{equation}
	Where $\varrho(\tilde{\boldsymbol{q}})$ is denoted as\cite{hua2019fractional}:
	\begin{equation}
		\label{eq:eq33}
		\varrho\left(\tilde{\boldsymbol{q}}\right)= \begin{cases}\tilde{\boldsymbol{q}}^{\frac{c_{1}}{c_{2}}}, & \text { if } \Bar{s}=0 \text { or } \Bar{s} \neq 0,\left|\tilde{\boldsymbol{q}}\right|>\epsilon \\ \tilde{\boldsymbol{q}}, & \text { if } \Bar{s} \neq 0,\left|\tilde{\boldsymbol{q}}\right| \leqslant \epsilon\end{cases}
	\end{equation}
	with $c_{1}$ and $c_{2}$ are positive constant and  $0<\frac{c1}{c2}<1$. $\epsilon$ is a threshold small value.
\end{rem}
\paragraph*{Step 6} (Development of QASMC for the attitude control). Considering \eqref{eq:eq31} and \eqref{eq:eq32}, we develop the QASMC as follows:
% inspiring\cite{labbadi2019robust,hua2019fractional}, the sliding surface described in \eqref{eq:eq31} can be changed to overcome the singularity issue as follows:
%tau1= -J*( (SkewSym(omegae))*Re*omegad - (Re*dotomegad) ) - (SkewSym(J*(omega))*(omega)) - J*(lamda)*( (1/2)*(e0*omegae) + SkewSym(e)*omegae)-lamda2*s-k.*tanh(s)
%where $\Tilde{s}=\dot{e}+\gamma_{1} e+\gamma_{2} \Delta(e) + \gamma_{3} \Omega_e$ and $\epsilon$ is a small value. Based on \eqref{eq:eq31} and \eqref{eq:eq32}, we propose the attitude controller as follows:
\begin{equation}
	\label{eq:eq34}
	\begin{aligned}
		\mathcal{T}= & -J_{m} ([\tilde{\omega}]_{\times} \tilde{R}_{Q} \omega_{d} - \tilde{R}_{Q} \dot{\omega}_{d} ) - [J_{m} \omega]_{\times} \omega \\
		& - J_{m} \gamma_{1} \frac{c_1}{c_2} \Delta(\tilde{q}) -\mu_{1} s - \hat{\boldsymbol{\Lambda}} sign(s)
	\end{aligned}
\end{equation}
Where $\Delta(\tilde{q})=[\tilde{q}_1\dot{\tilde{q}}_1, \tilde{q}_2\dot{\tilde{q}}_2, \tilde{q}_3\dot{\tilde{q}}_3]^{\top}$. To enhance the robustness and decrease the chattering issue (the main reason for control failure), the adaptive law is used to update the parameter $\Lambda$, adjusting the control signal. $\mu_{1}$ is a random positive constant, and the adaptive online update law 
with $\boldsymbol{\hat{\Lambda}}$=$[\hat{\Lambda}_{1}, \hat{\Lambda}_{2}, \hat{\Lambda}_{3}]^{\top}$ is defined as:
\begin{equation}
	\label{eq:ad1}
	\text{Adaptive Law} \begin{cases}
		& \dot{\hat{\Lambda}}_{1}= \lambda ||s||\\
		& \dot{\hat{\Lambda}}_{2}= \lambda ||s||\\
		&  \dot{\hat{\Lambda}}_{3}= \lambda ||s||
	\end{cases}
\end{equation}
where $\lambda>0$ and no-zero positive number.
\begin{thm}
	\label{thm:Theorem2}
	Consider the quaternion-based attitude dynamics of UAV described in \eqref{eq:eq8}. The attitude system is asymptotically stable and the attitude error  $\tilde{Q}=[\tilde{q}_{0},\tilde{\boldsymbol{q}}^{\top}]^{\top}\rightarrow[1,0,0,0]^{\top}$ and $\tilde{\omega}\rightarrow 0_{3\times 1}$  in \eqref{eq:eq28} when the proposed control input \eqref{eq:eq34} with Adaptation update law \eqref{eq:ad1} is applied. 
\end{thm}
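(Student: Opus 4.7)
The plan is to construct a composite Lyapunov function that jointly captures the sliding variable and the parameter adaptation error. Specifically, assuming the lumped uncertainty acting on the rotational dynamics admits some unknown bound $\Lambda^{*}$, let $\tilde{\Lambda}=\hat{\Lambda}-\Lambda^{*}\mathbf{1}$ and take
\[
\mathcal{V} \;=\; \tfrac{1}{2}\, s^{\top} s \;+\; \tfrac{1}{2\lambda}\,\tilde{\Lambda}^{\top}\tilde{\Lambda}.
\]
This mirrors the translational proof of Theorem \ref{thm:Theorem1} but is tailored to the sliding-mode structure of \eqref{eq:eq34}. The goal is to show $\dot{\mathcal{V}}\le -\mu_{1}\|s\|^{2}$, invoke Lemma \ref{Lemm:lem1}, and then recover convergence of $(\tilde{q}_{0},\tilde{\boldsymbol{q}},\tilde{\omega})$ from $s\to 0$.

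First I would differentiate $\bar{s}$ using \eqref{eq:eq31}--\eqref{eq:eq33}, substitute the attitude error dynamics from \eqref{eq:eq29}, and plug in the control law \eqref{eq:eq34}. The essential cancellations are the cross-coupling $[J_{m}\omega]_{\times}\omega$, the feed-forward $J_{m}\,[\tilde{\omega}]_{\times}\tilde{R}_{Q}\omega_{d}-J_{m}\tilde{R}_{Q}\dot{\omega}_{d}$, and the term $J_{m}\gamma_{1}\tfrac{c_{1}}{c_{2}}\Delta(\tilde{q})$, which is designed to cancel $\gamma_{1}\dot{\varrho}(\tilde{\boldsymbol{q}})$ on the branch where $\varrho(\tilde{\boldsymbol{q}})=\tilde{\boldsymbol{q}}^{c_{1}/c_{2}}$. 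After these cancellations the closed-loop sliding dynamics reduce, up to a bounded disturbance $d$ with $\|d\|\le\Lambda^{*}$, to $J_{m}\dot{s} = -\mu_{1} s - \hat{\Lambda}\,\mathrm{sign}(s) + d$.

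Next I would compute $\dot{\mathcal{V}} = s^{\top}\dot{s} + \tfrac{1}{\lambda}\tilde{\Lambda}^{\top}\dot{\hat{\Lambda}}$, insert the adaptation law \eqref{eq:ad1}, and bound $s^{\top}d\le \Lambda^{*}\|s\|$. The adaptive cross-term $\tilde{\Lambda}^{\top}\dot{\hat{\Lambda}}/\lambda$ is exactly what cancels the $-\hat{\Lambda}\|s\|$ contribution against the $+\Lambda^{*}\|s\|$ upper bound, yielding $\dot{\mathcal{V}}\le -\mu_{1}\|s\|^{2}\le 0$. This gives uniform boundedness of $s$ and $\tilde{\Lambda}$ and, upon integrating, $s\in\mathcal{L}_{2}$. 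Since $\tilde{R}_{Q},\omega_{d},\dot{\omega}_{d}$ are bounded by Assumption \ref{rem:rem1} and Theorem \ref{thm:Theorem1}, $\dot{s}$ is bounded, so $\|s\|^{2}$ is uniformly continuous; Barbalat's Lemma (Lemma \ref{Lemm:lem1}) then delivers $s\to 0$. On the sliding manifold $\tilde{\omega}=-\gamma_{1}\varrho(\tilde{\boldsymbol{q}})$, substituting into the kinematics $\dot{\tilde{\boldsymbol{q}}}=\tfrac{1}{2}(\tilde{q}_{0}I_{3}+[\tilde{\boldsymbol{q}}]_{\times})\tilde{\omega}$ and using the unit-norm constraint $\tilde{q}_{0}^{2}+\tilde{\boldsymbol{q}}^{\top}\tilde{\boldsymbol{q}}=1$ drives $\tilde{\boldsymbol{q}}\to 0$, hence $\tilde{q}_{0}\to 1$ and $\tilde{\omega}\to 0$, which is the desired conclusion $\tilde{Q}\to[1,0,0,0]^{\top}$.

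The main obstacle I anticipate is the piecewise definition of $\varrho(\tilde{\boldsymbol{q}})$ in \eqref{eq:eq33}: one must verify that $\Delta(\tilde{q})$ as used inside \eqref{eq:eq34} really equals $\dot{\varrho}(\tilde{\boldsymbol{q}})$ (up to the fractional exponent factor) on \emph{both} branches and that no Lyapunov increase occurs at the switching boundary $|\tilde{\boldsymbol{q}}|=\epsilon$. The linear branch near the origin is precisely the singularity-avoidance patch and must be handled so that $\dot{\mathcal{V}}\le 0$ continues to hold. A secondary technical point is the discontinuous $\mathrm{sign}(s)$ term, which should be treated either via a Filippov/differential-inclusion argument or, as is conventional in adaptive SMC, by interpreting $s^{\top}\mathrm{sign}(s)=\|s\|_{1}\ge\|s\|_{2}$ in the Lyapunov bound; either way the adaptive gain $\hat{\Lambda}$ is driven large enough to dominate the unknown $\Lambda^{*}$, which is exactly what the update law \eqref{eq:ad1} accomplishes.
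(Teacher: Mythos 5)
Your proposal follows essentially the same route as the paper: the identical composite Lyapunov function $\mathcal{L}_{s}=\tfrac{1}{2}s^{\top}s+\tfrac{1}{2\lambda}\tilde{\boldsymbol{\Lambda}}^{\top}\tilde{\boldsymbol{\Lambda}}$, the same cancellation of the Coriolis and feed-forward terms by the control law, the same use of the adaptation law to absorb the $\hat{\boldsymbol{\Lambda}}\,\mathrm{sign}(s)$ term, and Barbalat's Lemma to conclude. If anything you go further than the paper, which stops at $\dot{\mathcal{L}}_{s}\le 0$ without the sliding-manifold argument recovering $\tilde{Q}\to[1,0,0,0]^{\top}$ and without addressing the mismatch between $\Delta(\tilde{q})$ and $\dot{\varrho}(\tilde{\boldsymbol{q}})$ that you correctly flag.
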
	
\begin{proof}
	In order to prove Theorem.\ref{thm:Theorem2}, let Lemma 1. holds true, and the following Lyapunov function is chosen as:
	\begin{equation}
		\label{eq:eq35}
		\mathcal{L}_{s}=\frac{1}{2} s^{\top}s + \frac{1}{2 \lambda} \Tilde{\boldsymbol{\Lambda}}^{\top} \Tilde{\boldsymbol{\Lambda}}
	\end{equation}
	where $\Tilde{\boldsymbol{\Lambda}}= \hat{\boldsymbol{\Lambda}} - \boldsymbol{\Lambda}$. One can find the derivative of \eqref{eq:eq35} as:
	\begin{equation}
		\label{eq:eq36}
		\begin{aligned}
			\dot{\mathcal{L}}_s=s\left(\gamma_{1} \dot{\tilde{\boldsymbol{q}}} + \dot{\tilde{\omega}}\right)+ \frac{1}{\lambda} \Tilde{\boldsymbol{\Lambda}}^{\top} \dot{\tilde{\boldsymbol{\Lambda}}}
		\end{aligned}
	\end{equation}
	Based on \eqref{eq:eq28} and  substituting \eqref{eq:eq34}, the above equation is simplified as:
	\begin{equation}
		\label{eq39}
		\begin{aligned}
			&  \dot{\mathcal{L}}_s=- \mu_{1} ||s||^2 - s\hat{\boldsymbol{\Lambda}}sign(s) +
			\frac{1}{\lambda} \Tilde{\boldsymbol{\Lambda}}^{\top} \dot{\hat{\boldsymbol{\Lambda}}}
		\end{aligned}
	\end{equation}
	such that
	\begin{equation}
		\label{eq:ad2}
		\begin{aligned}
			\dot{\mathcal{L}}_s &=- \mu_{1} ||s||^2 - s(\tilde{\boldsymbol{\Lambda}}+\boldsymbol{\Lambda}) sign(s) +
			\frac{1}{\lambda} \Tilde{\boldsymbol{\Lambda}}^{\top} \lambda ||s||\\
			&  \leq - \mu_{1} ||s||^2 - \boldsymbol{\Lambda} ||s|| \leq 0
		\end{aligned}
	\end{equation}
	Thus, the attitude system \eqref{eq:eq8} is asymptotically stable with the proposed controller \eqref{eq:eq34}.\end{proof}
\begin{figure*}
	\centering{}\includegraphics[scale=0.44]{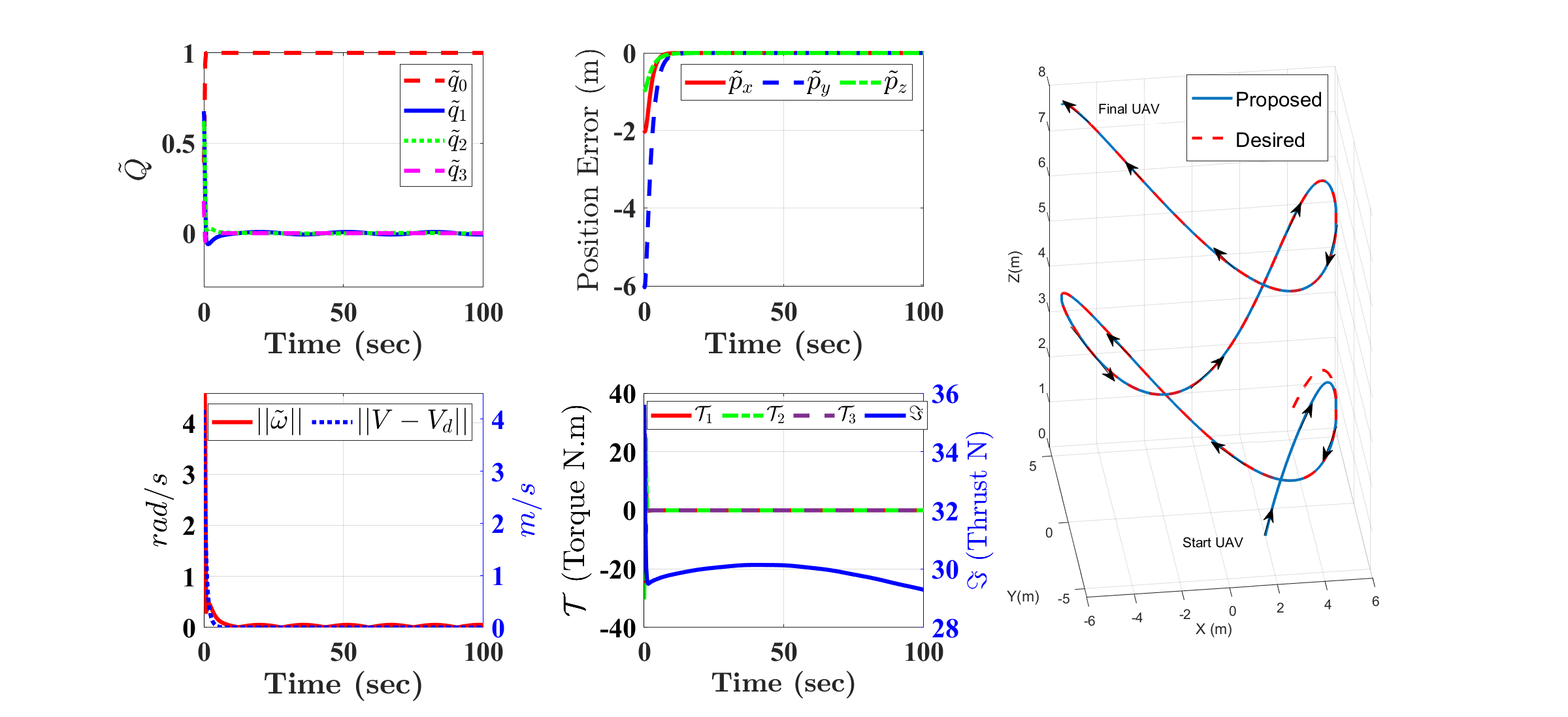}\caption{UAV flight trajectories, errors, and control signals}
	\label{fig:fig4}
\end{figure*}
\begin{figure*}
	\centering{}\includegraphics[scale=0.44]{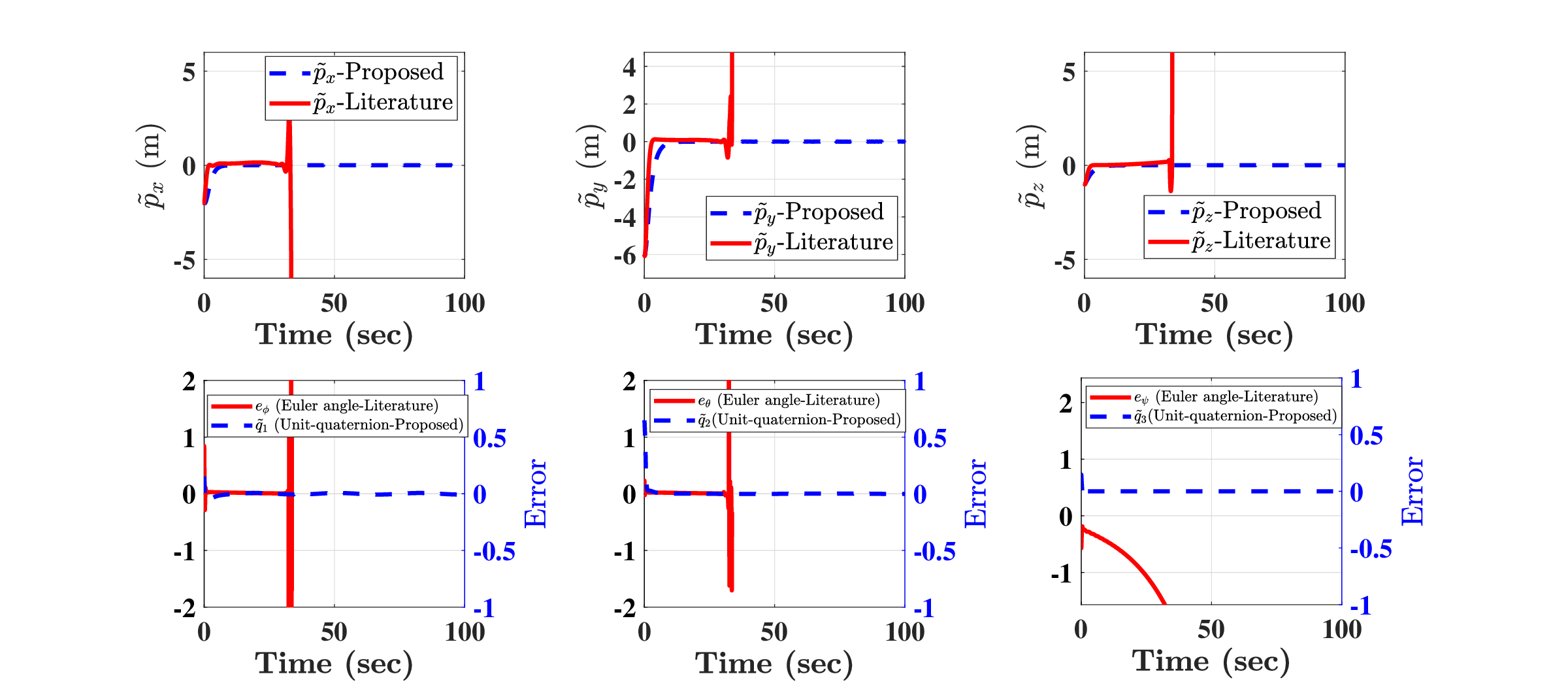}\caption{Position and attitude errors in literature\cite{labbadi2019robust} vs proposed work: Euler angles (literature\cite{labbadi2019robust} plotted in red solid line, while proposed approach plotted in blue dash-line.)}
	\label{fig:fig7}
\end{figure*}

\vspace{5pt}
\section{Results and discussion \label{sec:Result}}
This section presents the performance of the proposed quaternion-based position and attitude controllers for VTOL-UAV. The stability and tracking control capabilities are implemented to verify the proposed QABC and QASMC. Firstly, we simulate the proposed control system to study its performance in handling unknown time-varying parameters ($m$, $J_{11}$, $J_{22}$, $J_{44}$) uncertainties. Subsequently, a comprehensive comparison is provided to confirm the superiority and singularity-free of the proposed quaternion-based controllers compared to a commonly established method (Euler-based approach) in the literature.

\vspace{5pt}
\subsection{Discussion}
To evaluate the performance, a rigid symmetric body of UAV with mass $m$ and inertia $J_{m}= diag(J_{11}, J_{22}, J_{44})$ are assumed. The main goal is to track the desired position trajectory while the unknown time-varying uncertainties exist in parameters. The proposed control and UAV model parameters are provided in Table \ref{tab:Table3}.
The total time of simulation is set to 100 seconds. The proposed QABC is used as the translational control input to track the desired position and control linear velocity. Then, to overcome the underactuated issue of UAV, it generates the total thrust and desired attitude. Fig. \ref{fig:fig4} demonstrates the output performance of UAV controlled by the proposed QABC and QASMC. Fig. \ref{fig:fig4} confirms the robust and smooth performance of the proposed control system and its ability to track the UAV to the desired trajectory, despite the presence of unknown time-varying parameter uncertainties and large initial errors. The errors are depicted to converge from large initial errors to zero (equilibrium point). The rotational torque and thrust inputs are bounded, smooth, and robust in the presence of uncertainties, as shown in Fig. \ref{fig:fig4}.

\vspace{5pt}
\subsection{Comparision}
In this case, a comparison is conducted to demonstrate the singularity-free of the proposed quaternion-based controllers over the Euler angle-based controller approach in literature (e.g.\cite{labbadi2019robust}). The main deficiencies of the Euler angle-based controllers are kinematical and model representation singularities, causing instability and control failure. Fig. \ref{fig:fig7} compares attitude orientation tracking the performance of the proposed quaternion-based controllers with Euler angle-based ones. It confirms that while the proposed quaternion-based controllers have succeeded in following the desired orientations and positions, the Euler angle-based controllers fail to follow certain configurations and in turn UAV may become unstable, as shown in Fig. \ref{fig:fig7} which could potentially increase UAV incidents \cite{hashim2024Avionics}.

\vspace{5pt}
\section{conclusion\label{sec:co}}
In this work, we proposed a novel quaternion-based control system for the underactuated UAV. The proposed QABC controller tackled the underactuated complexity of controlling UAV by using the backstepping approach; meanwhile, the non-singular QASMC addressed stability and accurate tracking of attitude in the presence of unknown time-varying parameter uncertainties. The sliding surface was modified to provide the singularity-free feature for QASMC. The newly developed quaternion-based control system addressed the kinematic and model representation singularity. A comprehensive comparison was provided to confirm that the singularity-free of the proposed quaternion-based approach compared to other methods (e.g. Euler angle-based). The chattering and robustness shortcomings were addressed by designing adaptive features of the proposed controllers. Moreover, the asymptotic stability is guaranteed. Finally, the performance of controllers was verified by simulating underactuated UAV flight scenarios.
%This paper proposed a novel quaternion-based attitude and position control of UAV. The newly proposed controllers guarantee precise tracking and asymptotic stability by using the Lyapunov theorem and Barbalet Lemma. The proposed controllers are designed to consider the underactuated aspect of UAV. The new quaternion-based approach is robust to large parameter uncertainty while it avoids singularity issues in UAV flight. The quaternion-based fast terminal sliding mode surface brings states desired values in finite time. Additionally, the adaptive feature of controllers updates the control parameters online to adjust the control response in the presence of uncertainty. Moreover, the designed controllers have significantly reduced chattering issues and enhanced the control system performance, compared to existing methods. Finally, the flight control of UAV is simulated in MATLAB software to validate the performance of controllers.

\vspace{5pt}
\begin{table}[t]
	\centering{}\caption{\label{tab:Table3}}
	
	\begin{tabular}{ll|ll}
		
		\hline Control Parameters & Value & System Parameters & Value\\
		\hline
		$\theta_{x}$ & 0.8 & m & 3.5kg\\
		$\theta_{y}$ & 0.5 & $J_{11}$ & $2kg.m^2$\\
		$ \theta_{z}$ & 0.4 & $J_{22}$ & $2kg.m^2$\\
		$\gamma_{1}$ & 10 & $J_{33}$ & $3.5kg.m^2$\\
		g & $9.8 m/s^2$\\
		
		$c_{1}$ & 3 \\
		$c_{2}$& 5 & \\
		\hline
		
	\end{tabular}
\end{table}

\newpage

\bibliographystyle{IEEEtran}
\bibliography{RefVTOL}
% name your BibTeX data base

\balance

\end{document}